\documentclass{llncs}

\usepackage[utf8x]{inputenc}

\usepackage{amsmath}

\usepackage{amsthm}
\usepackage{amssymb}
\usepackage{breqn}
\usepackage{array}
\usepackage{enumerate}
\usepackage{bookmark}
\usepackage{hyperref}
\hypersetup{
  pdftitle = {Provable Advantage for Quantum Strategies in Random Symmetric XOR Games},
  pdfauthor = {Andris Ambainis, Janis Iraids},
  pdfkeywords = {quantum, random, nonlocal, games},
  pdfstartview=FitH,
  unicode=true
}

\begin{document}

\title{Provable Advantage for Quantum Strategies in Random Symmetric XOR Games}
\author{\texorpdfstring{Andris Ambainis, Jānis Iraids}{Andris Ambainis, Jānis Iraids}}
\titlerunning{Provable Advantage for Quantum Strategies in Random Symmetric XOR Games}
\institute{Faculty of Computing, University of Latvia,\\ Raiņa bulvāris 19, Riga, LV-1586, Latvia}

\maketitle

\bookmarksetup{startatroot}

\begin{abstract}
Non-local games are widely studied as a model to investigate the properties of quantum mechanics as opposed to classical mechanics. 
In this paper, we consider a subset of non-local games: symmetric XOR games of $n$ players with 0-1 valued questions. 
For this class of games, each player receives an input bit and responds with an output bit without communicating to the other players. 
The winning condition only depends on XOR of output bits and is constant w.r.t. permutation of players.

We prove that for almost any $n$-player symmetric XOR game the entangled value of the game is $\Theta\left (\frac{\sqrt{\ln{n}}}{n^{1/4}}\right )$ adapting an old result by Salem and Zygmund on the asymptotics of random trigonometric polynomials. 
Consequently, we show that the classical-quantum gap is $\Theta (\sqrt{\ln{n}})$ for almost any symmetric XOR game.
\end{abstract}

\section{Introduction}

Non-local games provide a simple way to test the difference between quantum mechanics 
and the classical world. A prototypical example of a non-local game is the CHSH game
\cite{CHTW04} (based on the CHSH inequality of \cite{CHSH69}). In the CHSH game, we have
two players who cannot communicate between themselves but may share common random bits or a bipartite quantum state (which has been exchanged before the beginning of the game). 
A referee sends one uniformly random bit $a\in\{0, 1\}$ to the $1^{\rm st}$ player
and another bit to the $2^{\rm nd}$ player. Players respond by sending one-bit 
answers $x, y\in\{0, 1\}$. They win in the following 2 cases:
\begin{enumerate}[(a)]
\item
If at least one of $a, b$ is equal to 0, players win if they produce 
$x, y$ such that $x=y$;
\item
If $a = b = 1$, players win if they produce $x, y$ such that $x\neq y$;
\end{enumerate}
Classically, CHSH game can be won with probability at most 0.75. In contrast, if
players use an entangled quantum state, they can win the game with probability
$\frac{1}{2}+\frac{1}{2\sqrt{2}}=0.85...$.

Other non-local games can be defined by changing the number of players, the number of
possible questions and answers and the winning condition. Many non-local games have been
studied and, in many cases, strategies that use an entangled quantum state outperform any classical strategy.

Recently \cite{ABB+12}, it has been shown that, for a large class of non-local games, quantum 
strategies are better than any classical strategy for almost all games in this class. 
Namely, \cite{ABB+12} considered 2-player games in which the questions $a, b$ are taken from 
the set $\{1, 2, \ldots, n\}$ and the winning condition is either $x=y$ or $x\neq y$,
depending on $a, b$. (Games with a winning condition of such form are called XOR games.)
\cite{ABB+12} showed that, for $1-o(1)$ fraction of all such games, the entangled value of the 
game is at least 1.2... times its classical value.

Then \cite{AIKV13}, it was discovered that a similar effect might hold for another class of
games: $n$-player symmetric XOR games with binary questions. Namely, \cite{AIKV13} showed a gap 
between entangled and classical values of order $\Omega(\sqrt{\log n})$ - assuming that a non-rigorous argument about the entangled value is correct.

In this paper, we make this gap rigorous, by proving upper and lower bounds on the entangled
value of a random game in this class. We show that, with a high probability, the entangled
value is equal to $\Theta(\frac{\sqrt{\log n}}{n^{1/4}})$. The quantum-vs-classical gap
of $\Theta(\sqrt{\log n})$ follows by combining this with the fact that the classical 
value is of the order $\Theta(\frac{1}{n^{1/4}})$ (shown in \cite{AIKV13}).

To prove this result, we use an expression for the entangled value of a symmetric $n$-player
XOR game with entangled answers from \cite{AKNR10}. This expression reduces finding
the entangled value to maximizing the absolute value of a polynomial in one complex variable.
If conditions for the XOR game are chosen at random, this expression reduces to 
random trigonometric polynomials studied in \cite{SZ54}. 

Although maxima of random trigonometric polynomials have been studied in \cite{SZ54},
they have been studied under different conditions. For this reason, we cannot
apply the results from \cite{SZ54} directly. Instead, we adapt the ideas from \cite{SZ54}
to prove a bound on maxima of random trigonometric polynomials that would be
applicable in our setting.

\section{Definitions}

A non-local game with $n$ players proceeds as follows:
\begin{enumerate}[1)]
\item Players are separated so that they cannot communicate -- hence the name \emph{non-local},
\item The players receive inputs $x_1, x_2, \ldots, x_n \in I$ where $I$ is the set of possible inputs. $i$-th player receives $x_i$,
\item The players respond with outputs $y_1, y_2, \ldots, y_n \in O$ where $O$ is the set of possible outputs.
\item The winning condition $P(x_1, \ldots, x_n, y_1, \ldots, y_n)$ is consulted to determine whether the players win or lose. The condition is known to everyone at the start of the game.
\end{enumerate}

The players are informed of the rules of the game and they can agree upon a strategy and exchange other information.
In the classical case players may only use shared randomness. In the quantum case they can use an entangled quantum state which is distributed to the players before the start of the game.

We will restrict ourselves to the case when $I=O=\{0, 1\}$ and the vector of inputs $(x_1, \ldots, x_n)$ is chosen uniformly at random.
In an \emph{XOR} game, the winning condition $P(x_1, \ldots, x_n, y_1, \ldots, y_n)$ depends only on $x_1, \ldots, x_n$ and the 
parity of the output bits $\oplus_{j=1}^n{y_j}$. A game is \emph{symmetric} if the winning condition does not change if $x_1, \ldots, x_n$ are permuted. 

The winning conditions of a symmetric XOR game can be described by a list of $n+1$ bits: $G=(G_0, G_1, \ldots, G_n)$, where the players win if and only if $G_i=\oplus_{j=1}^n{y_j}$ when $\sum_{j=1}^n{x_j}=i$. 

The \emph{entangled value} of the game $Val_Q(G)$ is the probability of winning minus the probability of losing in the conditions that the players can use a shared quantum-physical system. In this paper, we study the value of symmetric XOR games when the winning condition $G$ is chosen randomly from the uniform distribution of all $(n+1)$-bit lists.
We use the following lemma (which follows from a more general result by Werner and Wolf for non-symmetric XOR games \cite{WW01}):

\begin{lemma}[See \cite{AKNR10}]
\label{lemma:val}
The entangled value of a symmetric XOR game \cite{AKNR10} is
\begin{equation}
\label{eq:val}
  Val_Q(G)=\max_{|\lambda|=1}{\left|\sum_{j=0}^{n}{(-1)^{G_j}p_j \lambda^j}\right|}
\end{equation}
where $p_j$ is the probability that players are given an input vector $(x_1, \ldots, x_n)$ with $j$ variables $x_i=1$.
\end{lemma}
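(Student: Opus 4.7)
The plan is to derive Lemma~\ref{lemma:val} from the Werner--Wolf characterization of the entangled value of multi-player XOR games and then exploit the symmetry of $G$ to collapse the resulting multi-variable optimization to a single complex variable on the unit circle. Werner and Wolf~\cite{WW01} show that for any $n$-player XOR game with binary inputs distributed according to $p(x_1,\dots,x_n)$ and bias function $c(x)=(-1)^{G(x)}$, one has
\[
Val_Q(G)=\max\Bigl|\sum_{x\in\{0,1\}^n} p(x)(-1)^{G(x)}\prod_{i=1}^{n}\langle u_i,v_i^{x_i}\rangle\Bigr|,
\]
where the maximum is taken over unit vectors $u_i,v_i^0,v_i^1$ in a suitable complex Hilbert space. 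This is the starting point.

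Next, I would use the symmetry of the game. Since $G$ depends only on the Hamming weight of $x$ and $p(x)=1/2^n$, the objective above is invariant under any permutation of the players. Because the feasible set for $(u_i,v_i^0,v_i^1)$ is a product of unit spheres, a standard convexity/averaging argument (or, equivalently, an application of the orbit-averaging trick together with concavity of the max over a permutation group) lets me restrict attention to strategies for which $u_i=u$, $v_i^0=v_0$, $v_i^1=v_1$ for all $i$. Set $\lambda_k:=\langle u,v_k\rangle$ for $k=0,1$; then $|\lambda_0|,|\lambda_1|\le 1$ by Cauchy--Schwarz, and the product over $i$ collapses to $\lambda_0^{n-j}\lambda_1^{j}$ when exactly $j$ of the $x_i$ equal $1$. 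Grouping inputs by their weight yields
\[
Val_Q(G)=\max_{|\lambda_0|,|\lambda_1|\le 1}\Bigl|\sum_{j=0}^{n} p_j(-1)^{G_j}\lambda_0^{n-j}\lambda_1^{j}\Bigr|.
\]

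Finally, I would invoke the maximum modulus principle in each variable separately: the expression inside the absolute value is a polynomial in $(\lambda_0,\lambda_1)$, so its supremum over the closed bidisk is attained on the torus $|\lambda_0|=|\lambda_1|=1$. Writing $\lambda_0=e^{i\alpha}$ and $\lambda=\lambda_1/\lambda_0=e^{i\beta}$, the modulus factorizes as $|\lambda_0|^n\cdot\bigl|\sum_j p_j(-1)^{G_j}\lambda^j\bigr|=\bigl|\sum_j p_j(-1)^{G_j}\lambda^j\bigr|$, proving the claimed identity.

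The main obstacle I expect is the symmetrization step: showing rigorously that restricting to player-independent vectors $(u,v_0,v_1)$ does not decrease the optimum. One clean way is to note that, given any optimal strategy, averaging the inner-product tensor $\bigotimes_i\langle u_i,v_i^{\cdot}\rangle$ over the symmetric group of players only affects the symmetric component of the objective, which is exactly what $G_0,\dots,G_n$ and the uniform input distribution pair against; then a purification/dilation argument recovers a product-form strategy with the same or larger absolute value. Once that is in place, the reduction to a single unit-modulus $\lambda$ is immediate from the maximum modulus principle.
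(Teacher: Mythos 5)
The paper itself does not prove this lemma; it is imported from \cite{AKNR10}, which derives it from the Werner--Wolf characterization \cite{WW01}, so your overall route (Werner--Wolf, then symmetrize across players, then reduce two unimodular variables to one) is the intended one. One minor issue first: Werner--Wolf's theorem for $n$-player XOR games is stated with a unit-modulus complex number per player--input pair, $Val_Q(G)=\max_{|\lambda_{i,b}|=1}\bigl|\sum_x p(x)(-1)^{G(x)}\prod_i \lambda_{i,x_i}\bigr|$, not with inner products $\langle u_i,v_i^{x_i}\rangle$ of unit vectors; your hybrid form happens to have the same value (each inner product ranges over the closed unit disk, and the maximum modulus principle returns you to the torus), but you should either quote the scalar form or prove the equivalence rather than attribute your version to \cite{WW01}.

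The genuine gap is the symmetrization step, which is the entire mathematical content of the lemma, and your proposed justification does not establish it. Averaging the tensor $\bigotimes_i(\lambda_{i,0},\lambda_{i,1})$ over the symmetric group preserves the (multilinear, symmetric) objective, but the average of product tensors is no longer a product tensor, and no ``purification/dilation'' argument recovers one; more to the point, for a general symmetric multilinear form the maximum over a product of unit balls can strictly exceed its maximum over the diagonal (the polarization-constant phenomenon), so no soft convexity/orbit-averaging argument can suffice. What makes the statement true here is special structure: after normalizing $\lambda_{i,0}=1$ for every player (allowed, since dividing player $i$'s pair by $\lambda_{i,0}$ multiplies the whole sum by a unimodular constant), the objective becomes $\bigl|\sum_{j} 2^{-n}(-1)^{G_j}e_j(\lambda_1,\dots,\lambda_n)\bigr|$ with $e_j$ the elementary symmetric polynomials, i.e.\ a symmetric polynomial that is affine in each $\lambda_i$ ranging over the closed unit disk. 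The Grace--Walsh--Szeg\H{o} coincidence theorem then guarantees that its value at any such point is already attained at a diagonal point $\lambda_1=\dots=\lambda_n=\zeta$ with $|\zeta|\le 1$, and the maximum modulus principle moves $\zeta$ to the unit circle, yielding exactly $\max_{|\lambda|=1}\bigl|\sum_j p_j(-1)^{G_j}\lambda^j\bigr|$. Your final step of factoring out $\lambda_0^n$ is fine; it is the reduction to identical per-player strategies that needs Grace--Walsh--Szeg\H{o} (or an equivalent bespoke argument), not an averaging trick.
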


In our case, since $(x_1, \ldots, x_n)$ is uniformly random, 
we have $p_j=\frac{\binom{n}{j}}{2^n}$. 

In the following sections we introduce additional notation to keep the proofs more concise as well as to keep in line with the original proofs in \cite{SZ54}:

The \emph{Rademacher system} is a set of functions $\{\varphi_m(t)\}$ for $m = 1, \ldots, n$ over $0 \leq t \leq 1$ such that $\varphi_m(t)=(-1)^k$, where $k$ is the $m$-th digit after the binary point in the binary expansion of $t$. Rademacher system will turn out to be a convenient way to state that $\{G_j\}$ are random variables that follow a uniform distribution: if $t$ is chosen randomly from a uniform distribution on $0\leq t\leq 1$, then $\{\varphi_m(t)\}_{m=1}^{n+1}$ generates a uniformly random element from $\{+1,-1\}^{n+1}$. That in turn corresponds to coefficients $(-1)^{G_j}$ in eq. (\ref{eq:val}) being picked randomly.

Furthermore, we define
\[ r_m = \binom{n}{m} \quad (n \text{ will be clear from context}), \]
\[ R_n = \sum_{m=0}^n{r_m^2}, \]
\[ T_n = \sum_{m=0}^n{r_m^4}, \]
\[ P_n(x,t) = \sum_{m=0}^n{r_m \varphi_{m+1}(t) \cos{m x}}, \]
\[ M_n(t) = \max_{ 0 \leq x < 2\pi}{|P_n(x,t)|}.\]

\section{Main Result}

By adapting the work of Salem and Zygmund \cite{SZ54} on the asymptotics of random trigonometric polynomials, we show

\begin{theorem}
\label{thm:lower}
\[\lim_{n\rightarrow \infty}{\Pr[M_n(t) \geq C_1 \sqrt{R_n \ln{n}}]}=1\]
\end{theorem}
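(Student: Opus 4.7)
The plan is to follow the exponential-moment method of Salem and Zygmund~\cite{SZ54}. The starting point is the deterministic inequality $\int_0^{2\pi} e^{\lambda P_n(x,t)}\,dx \le 2\pi\, e^{\lambda M_n(t)}$, valid for any $\lambda>0$, which rearranges to
\[
  M_n(t) \;\ge\; \frac{1}{\lambda}\log\!\left(\frac{I_\lambda(t)}{2\pi}\right),\qquad I_\lambda(t) \;:=\; \int_0^{2\pi} e^{\lambda P_n(x,t)}\,dx.
\]
I would set $\lambda = c_0\sqrt{\log n / R_n}$ for a small absolute constant $c_0>0$. The theorem then reduces to two claims: a first-moment bound $\mathbb{E}_t[I_\lambda(t)]$ of order $n^{c_0^2/4}$, and concentration of $I_\lambda(t)$ around its mean.

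For the first moment, independence of $\{\varphi_{m+1}(t)\}$ gives $\mathbb{E}_t[I_\lambda(t)] = \int_0^{2\pi}\prod_{m=0}^n \cosh(\lambda r_m\cos mx)\,dx$. Using $\cosh y \ge \exp(y^2/2 - y^4/12)$ termwise, passing to the exponential of a single sum, and invoking Jensen's inequality together with $\frac{1}{2\pi}\int_0^{2\pi}\cos^2(mx)\,dx = \tfrac12$ (for $m\ge1$) yields
\[
  \mathbb{E}_t[I_\lambda(t)] \;\ge\; 2\pi\exp\!\bigl(\lambda^2 R_n/4 - O(\lambda^4 T_n)\bigr).
\]
A Laplace/Stirling estimate on the sharp binomial peak at $m=n/2$ gives $T_n = O(R_n^2/\sqrt n)$, so $\lambda^4 T_n = O((\log n)^2/\sqrt n) = o(1)$, and the required bound $\mathbb{E}_t[I_\lambda(t)] \ge 2\pi\, n^{c_0^2/4}(1-o(1))$ follows.

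The hard part will be concentration. I plan to estimate the second moment
\[
  \mathbb{E}_t[I_\lambda(t)^2] \;=\; \int_0^{2\pi}\!\!\int_0^{2\pi}\prod_{m=0}^n \cosh\!\bigl(\lambda r_m(\cos mx+\cos my)\bigr)\,dx\,dy
\]
and show it is $(1+o(1))\,\mathbb{E}_t[I_\lambda(t)]^2$, after which Chebyshev's inequality gives $\Pr\!\left[I_\lambda(t)\ge\tfrac12\mathbb{E}_t[I_\lambda(t)]\right]\to 1$. Using $\cosh(a+b)=\cosh a\cosh b(1+\tanh a\tanh b)$, the ratio of the two moments becomes an average of $\prod_m(1+\tanh(\lambda r_m\cos mx)\tanh(\lambda r_m\cos my))$, whose logarithm is approximately $\lambda^2\sum_m r_m^2\cos mx\cos my = \tfrac{\lambda^2}{2}\bigl[K_n(x-y)+K_n(x+y)\bigr]$ for $K_n(u):=\sum_m r_m^2\cos mu$. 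The kernel $K_n$ has a narrow peak of height $R_n$ at $u=0$ but total mass only $2\pi r_0^2 = 2\pi$, so its average over the torus is $1$, giving an average interaction of only $\lambda^2 = o(1)$.

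Making this heuristic rigorous — trading the large peak of $K_n$ near the diagonal against its small mean, and controlling the remaining error from the $\tanh$ expansion — is the technical heart of the argument, and is exactly the kind of kernel estimate carried out in~\cite{SZ54}. Once the variance bound $\mathbb{E}_t[I_\lambda^2]\le(1+o(1))\mathbb{E}_t[I_\lambda]^2$ is in place, combining with the first-moment estimate gives $M_n(t)\ge(c_0/4)\sqrt{R_n\log n}(1-o(1))$ with probability tending to $1$, proving the theorem for any $C_1 < c_0/4$.
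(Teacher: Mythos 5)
Your skeleton is the same as the paper's: exponential moments of $P_n(x,t)$ with $\lambda\asymp\sqrt{\ln n/R_n}$, a first-moment lower bound of the form $e^{\lambda^2R_n/4-O(\lambda^4T_n)}$ (this is exactly what Lemma \ref{lemma:1} gives, with $T_n/R_n^2=O(n^{-1/2})$ being Lemma \ref{lemma:5}), then a second-moment comparison and a Chebyshev/Paley--Zygmund step (the paper uses Lemma \ref{lemma:4} with $\delta=n^{-\eta}$, but that difference is immaterial), and finally $M_n(t)\ge\frac{1}{\lambda}\log(I_\lambda/2\pi)\gtrsim\frac{c_0}{4}\sqrt{R_n\ln n}$. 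The first-moment computation and the final deduction are fine.

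The gap is the concentration step, i.e. the bound $\mathbb{E}_t[I_\lambda^2]\le(1+o(1))\,\mathbb{E}_t[I_\lambda]^2$, which you defer to ``the kind of kernel estimate carried out in [SZ54]''. That is precisely the step that cannot be outsourced: in Salem--Zygmund the coefficients do not depend on $n$, whereas here they are sharply peaked binomials, and this mismatch is the stated reason the paper has to reprove the theorem rather than cite it. Moreover, the heuristic you offer in its place is not sound as written: what must be controlled is a (weighted) average of the \emph{exponential} of $\tfrac{\lambda^2}{2}\bigl[K_n(x-y)+K_n(x+y)\bigr]$ (plus the $\tanh$-expansion error), and Jensen goes the wrong way for an upper bound, so the fact that $K_n$ has mean $r_0^2=1$ does not make the interaction negligible. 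The near-diagonal region, where the kernel is of size $R_n$, contributes roughly $e^{c\,c_0^2\ln n}$ times the polynomially small measure of the peak, so the ratio tends to $1$ only when $c_0$ lies below an explicit threshold; for larger $c_0$ this method genuinely fails, so ``small absolute constant'' must be quantified. The paper closes exactly this hole by writing the second-moment integrand as $e^{\frac12\lambda^2(R_n+r_0^2)}e^{\frac12\lambda^2 S_n(x,y)}$ and applying Lemma \ref{lemma:2} with the sup bound $|S_n|\le 3R_n$ and the $L^2$ bound $\frac{5}{4}T_n$, which via Lemma \ref{lemma:5} yields the error term $\frac43 n^{-1/2}e^{\frac32\lambda^2R_n}$ and hence the constraint $\lambda^2R_n=\theta^2\ln n$ with $\theta^2<\tfrac13$ --- this is where $C_1=\frac{1}{4\sqrt3}$ comes from. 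To complete your version you would need an analogous sup-versus-$L^2$ trade-off for your kernel (and control of the cubic $\tanh$ error terms), together with the resulting admissible range of $c_0$; as it stands the decisive estimate is asserted, not proved.
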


\begin{theorem}
\label{thm:upper}
\[\lim_{n\rightarrow \infty}{\Pr[M_n(t) \leq C_2 \sqrt{R_n \ln{n}}]}=1\]
\end{theorem}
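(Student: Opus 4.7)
The plan is to bound $M_n(t)$ via the standard discretize-and-interpolate strategy for suprema of random trigonometric polynomials. Since $P_n(x,t)$ is a trigonometric polynomial of degree $n$, its values on a grid of $O(n)$ points approximate its supremum well (Bernstein's inequality), so it suffices to control $P_n(x,t)$ at a discrete set of $x$'s and apply a union bound.

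First I would establish a sub-Gaussian pointwise tail bound. For fixed $x$, $P_n(x,t) = \sum_{m=0}^{n} r_m \varphi_{m+1}(t) \cos(mx)$ is a linear combination of the independent $\pm 1$-valued Rademacher functions $\varphi_{m+1}(t)$ with coefficients $r_m \cos(mx)$. Using $\cosh(y) \leq e^{y^2/2}$,
\[
E_t\!\bigl[e^{\beta P_n(x,t)}\bigr] \;=\; \prod_{m=0}^{n}\cosh\!\bigl(\beta r_m \cos(mx)\bigr) \;\leq\; \exp\!\Bigl(\tfrac{\beta^2}{2}\sum_{m=0}^n r_m^2 \cos^2(mx)\Bigr) \;\leq\; e^{\beta^2 R_n/2},
\]
and a Chernoff optimization over $\beta$ yields $\Pr_t\!\bigl[\,|P_n(x,t)| \geq \lambda \sqrt{R_n}\,\bigr] \leq 2e^{-\lambda^2/2}$.

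Second, I would discretize: let $N = \lceil 2\pi n\rceil$ and $x_k = 2\pi k / N$ for $k=0,\dots,N-1$. Setting $\lambda = C_2\sqrt{\ln n}$ and taking a union bound,
\[
\Pr\!\Bigl[\max_k |P_n(x_k,t)| \geq C_2 \sqrt{R_n \ln n}\Bigr] \;\leq\; 2N\, e^{-C_2^2 (\ln n)/2} \;\leq\; 4\pi n \cdot n^{-C_2^2/2},
\]
which tends to $0$ provided $C_2 > 2$. Finally, to pass from the grid to the full supremum, I would invoke Bernstein's inequality for trigonometric polynomials: since $P_n(\cdot,t)$ has degree $n$, $\|\partial_x P_n(\cdot,t)\|_\infty \leq n M_n(t)$. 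Any $x$ lies within $\pi/N \leq 1/(2n)$ of some grid point, so
\[
M_n(t) \;\leq\; \max_k |P_n(x_k,t)| + \tfrac{\pi}{N}\cdot n\, M_n(t) \;\leq\; \max_k |P_n(x_k,t)| + \tfrac{1}{2} M_n(t),
\]
hence $M_n(t) \leq 2 \max_k |P_n(x_k,t)|$. Combining the two steps and renaming the constant, $M_n(t) \leq C_2 \sqrt{R_n \ln n}$ with probability $1-o(1)$.

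No step is a serious obstacle: the argument does not exploit any specific property of the weights $r_m = \binom{n}{m}$ beyond $\sum r_m^2 = R_n$, so the bound depends on the coefficient sequence only through $R_n$. The only place one must be careful is to keep the grid size $N$ linear in $n$ so that Bernstein's inequality loses only a constant factor while the union bound remains manageable; choosing $C_2$ strictly greater than $2$ absorbs the factor of $2$ from the Bernstein step.
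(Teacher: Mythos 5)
Your argument is correct, but it takes a genuinely different route from the paper. You prove the pointwise sub-Gaussian bound (which is exactly the content of the upper half of the paper's Lemma \ref{lemma:1}, via $\cosh y \le e^{y^2/2}$), then discretize the circle into $O(n)$ points, apply a union bound, and recover the full supremum through Bernstein's inequality $\|\partial_x P_n(\cdot,t)\|_\infty \le n M_n(t)$, losing a factor $2$. The paper instead never discretizes: it integrates $e^{\lambda |P_n(x,t)|}$ over both $x$ and $t$, lower-bounds the $x$-integral by $\frac{1-\theta}{n}e^{\theta\lambda M_n(t)}$ using the Salem--Zygmund level-set lemma (Lemma \ref{lemma:3}: $|P|\ge \theta M$ on an interval of length at least $\frac{1-\theta}{n}$), upper-bounds it by $4\pi e^{\lambda^2 R_n/2}$ via Lemma \ref{lemma:1}, and then sums over $n$ in a Borel--Cantelli fashion, which actually yields an almost-sure eventual bound rather than just convergence in probability. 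The trade-offs: the paper's route avoids the Bernstein factor of $2$ and gives the sharper constant $C_2 = 2$ (as $\theta \to 1$, $\eta \to 0$), whereas your route gives roughly $2\sqrt{2}$ at best (your stated condition ``$C_2 > 2$'' before doubling is conservative --- $C_2 > \sqrt{2}$ already makes $n\cdot n^{-C_2^2/2}$ vanish, but even so the final constant exceeds the paper's); on the other hand, your argument is more elementary and self-contained (Hoeffding, union bound, Bernstein), and like the paper's it uses the weights only through $R_n = \sum_m r_m^2$, i.e.\ only $\cos^2 mx \le 1$. Since the theorem as stated only asserts the existence of some constant $C_2$, your proof establishes it; only the explicit value $C_2 = 2$ claimed in the paper is not reached by your method.
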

Our proof yields $C_1=\frac{1}{4\sqrt{3}}$ and $C_2=2$.

We will now show how these two theorems lead to an asymptotic bound for the entangled value of a random game.

\begin{corollary}
For almost all $n$-player symmetric quantum XOR games the value of the game is asymptotically $\frac{\sqrt{\ln{n}}}{n^{1/4}}$.
\end{corollary}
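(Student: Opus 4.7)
The plan is to combine Lemma~\ref{lemma:val} with Theorems~\ref{thm:lower} and~\ref{thm:upper}, and then to compute $R_n$ asymptotically using Vandermonde's identity and Stirling's formula. Parametrising $|\lambda|=1$ as $\lambda=e^{ix}$ and identifying the uniformly random signs $(-1)^{G_j}$ with the Rademacher values $\varphi_{j+1}(t)$, Lemma~\ref{lemma:val} gives
\[
 2^{n}\,Val_Q(G) \;=\; \max_{x\in[0,2\pi)}\left|f(x,t)\right|, \qquad f(x,t)=\sum_{j=0}^{n}\varphi_{j+1}(t)\binom{n}{j}e^{ijx}.
\]
Decomposing $f=P_n+iS_n$, where $S_n(x,t)=\sum_{m=0}^{n} r_m\,\varphi_{m+1}(t)\sin(mx)$, the real part is exactly $P_n(x,t)$, so $\max_x|f(x,t)|\ge M_n(t)$. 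Theorem~\ref{thm:lower} then yields $Val_Q(G)\ge 2^{-n}C_1\sqrt{R_n\ln n}$ with probability tending to $1$.

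For the matching upper bound I would use the pointwise inequality $|f|\le|P_n|+|S_n|$ and invoke Theorem~\ref{thm:upper} both for $P_n$ and for the sine analogue $S_n$. The latter requires the remark that the argument underlying Theorem~\ref{thm:upper} is completely agnostic to whether the trigonometric factors are cosines or sines: the Salem--Zygmund-style bound only uses the coefficient magnitudes $r_m$ together with Rademacher independence, and $|\sin mx|\le 1$ plays the same role as $|\cos mx|\le 1$. Consequently $\max_x|f(x,t)|\le 2C_2\sqrt{R_n\ln n}$ with probability $1-o(1)$, so $Val_Q(G)\le 2^{1-n}C_2\sqrt{R_n\ln n}$ w.h.p.

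Finally, by Vandermonde's identity $R_n=\sum_{m=0}^{n}\binom{n}{m}^{2}=\binom{2n}{n}$, and Stirling gives $R_n\sim 4^n/\sqrt{\pi n}$, hence $2^{-n}\sqrt{R_n}\sim(\pi n)^{-1/4}$. Substituting into the two-sided bound and taking the intersection of the two $1-o(1)$ events via a union bound, I obtain
\[
  Val_Q(G) \;=\; \Theta\!\left(\frac{\sqrt{\ln n}}{n^{1/4}}\right)
\]
with probability tending to $1$, which is exactly the corollary.

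The only non-mechanical step is the treatment of the imaginary part, because Theorems~\ref{thm:lower}--\ref{thm:upper} are literally stated only for the cosine polynomial $P_n$. Observing that the upper-bound proof carries over \emph{verbatim} to $S_n$ costs at most a factor of two in the leading constant and does not affect the $\Theta$-asymptotics; the rest of the argument is a direct concatenation of the earlier results with elementary asymptotics of central binomial coefficients.
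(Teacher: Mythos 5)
Your proposal is correct and follows essentially the same route as the paper: bound $Val_Q(G)$ below by the cosine polynomial $P_n$ (via $|f|\geq|\mathfrak{Re}\,f|$) and above by $|P_n|+|S_n|$, apply Theorems~\ref{thm:lower} and~\ref{thm:upper} with the observation that the upper-bound proof works identically for the sine polynomial, and finish with $R_n=\binom{2n}{n}\sim 4^n/\sqrt{\pi n}$ by Vandermonde and Stirling. The only cosmetic difference is that you parametrise $\lambda=e^{ix}$ and write $f=P_n+iS_n$ explicitly rather than writing $\mathfrak{Re}$ and $\mathfrak{Im}$ of the sum, which is the same decomposition.
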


\begin{proof}
From Lemma \ref{lemma:val},
\[Val_Q(G) \geq \max_{|\lambda|=1}{\left|\mathfrak{Re}\left(\sum_{j=0}^{n}{\frac{(-1)^{G_j}\binom{n}{j}\lambda^j}{2^n}}\right)\right|} = \max_{\alpha \in [0;2\pi]}{\left|\sum_{j=0}^{n}{\frac{(-1)^{G_j}\binom{n}{j}\cos{j\alpha}}{2^n}}\right|},\]
and
\[\begin{split}Val_Q(G) \leq \max_{|\lambda|=1}{\left|\mathfrak{Re}\left(\sum_{j=0}^{n}{\frac{(-1)^{G_j}\binom{n}{j}\lambda^j}{2^n}}\right)\right|} + \max_{|\lambda|=1}{\left|\mathfrak{Im}\left(\sum_{j=0}^{n}{\frac{(-1)^{G_j}\binom{n}{j}\lambda^j}{2^n}}\right)\right|} =\\
= \max_{\alpha \in [0;2\pi]}{\left|\sum_{j=0}^{n}{\frac{(-1)^{G_j}\binom{n}{j}\cos{j\alpha}}{2^n}}\right|} + \max_{\alpha \in [0;2\pi]}{\left|\sum_{j=0}^{n}{\frac{(-1)^{G_j}\binom{n}{j}\sin{j\alpha}}{2^n}}\right|}
\end{split}
\]
For a random game $\{(-1)^{G_j}\}$ follow the same distribution as $\left\{\varphi_{j+1}(t)\right\}$ for $t$ uniformly distributed from interval $[0;1]$. Therefore Theorem \ref{thm:lower} and Theorem \ref{thm:upper} apply. Note that Theorem \ref{thm:upper} is true for cosines as well as sines since we only use that $\cos^2{x} \leq 1$, and so
\begin{equation}
\label{eq:bounds}
\lim_{n\rightarrow \infty}{\Pr\left[C_1\frac{\sqrt{R_n \ln{n}}}{2^n} \leq Val_Q(G) \leq 2C_2 \frac{\sqrt{R_n \ln{n}}}{2^n} \right]} = 1
\end{equation}
Finally, 
\[\frac{\sqrt{R_n \ln{n}}}{2^n} = \frac{\sqrt{\binom{2n}{n} \ln{n}}}{2^n} \sim \frac{\sqrt{\frac{4^n}{\sqrt{\pi n}} \ln{n}}}{2^n} = \sqrt{\frac{\ln{n}}{\sqrt{\pi n}}}\]
\end{proof}

\section{Proof of Upper and Lower Bounds}

We now proceed to prove theorems \ref{thm:lower} and \ref{thm:upper}.
Our proof is based on an old result by Salem and Zygmund \cite{SZ54}, 
in which they prove bounds on the asymptotics of random trigonometric polynomials
in a different setting (in which the coefficients $r_m$ are not allowed to depend on $n$). 

Due to the difference in the two settings, we cannot immediately apply the results
from \cite{SZ54}. Instead, we prove corresponding theorems for our setting, re-using the 
parts of proof from \cite{SZ54} which also work in our case and replacing other parts with
different arguments.

\begin{lemma}[From \cite{SZ54}]
\label{lemma:1}
Let $f_n(t)=\sum_{m=0}^n{c_m \varphi_{m+1}(t)}$, where $\{\varphi_{m+1}(t)\}$ is the Rademacher system and $c_m$ are real constants. Let $C_n=\sum_{m=0}^n{c_m^2},D_n=\sum_{m=0}^n{c_m^4}$ and let $\lambda$ be any real number. Then
\[e^{\frac{1}{2}\lambda^2C_n - \lambda^4 D_n} \leq \int_0^1{e^{\lambda f_n(t)}\,\mathrm{d}t} \leq e^{\frac{1}{2}\lambda^2 C_n}.\]
\end{lemma}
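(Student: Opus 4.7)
The plan is to exploit the fundamental fact that, for $t$ uniform on $[0,1]$, the Rademacher functions $\varphi_1(t),\varphi_2(t),\ldots$ are i.i.d.\ uniform $\pm 1$ random variables. Therefore
\[
\int_0^1 e^{\lambda f_n(t)}\,\mathrm{d}t \;=\; \mathbb{E}\!\left[\prod_{m=0}^n e^{\lambda c_m \varphi_{m+1}}\right] \;=\; \prod_{m=0}^n \mathbb{E}\!\left[e^{\lambda c_m \varphi_{m+1}}\right] \;=\; \prod_{m=0}^n \cosh(\lambda c_m),
\]
reducing the inequality to two scalar inequalities of the form $e^{y^2/2 - y^4} \le \cosh(y) \le e^{y^2/2}$ (applied at $y=\lambda c_m$), after which we multiply over $m$ and use the definitions of $C_n$ and $D_n$.

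For the upper bound $\cosh(y)\le e^{y^2/2}$, I would compare Taylor coefficients term by term: since $(2k)!\ge 2^k k!$, we have $\frac{y^{2k}}{(2k)!}\le \frac{(y^2/2)^k}{k!}$, so summing over $k\ge 0$ yields $\cosh(y)\le e^{y^2/2}$. Taking the product of these inequalities over $m=0,\ldots,n$ produces $\prod_m e^{\lambda^2 c_m^2/2} = e^{\lambda^2 C_n/2}$, giving the upper bound in the statement.

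For the lower bound, I would prove the pointwise inequality $\ln \cosh(y) \ge \tfrac{1}{2}y^2 - y^4$ for all real $y$. Let $h(y)=\ln\cosh y - \tfrac12 y^2 + y^4$. Then $h(0)=0$, $h'(0)=0$, and
\[
h''(y) \;=\; \operatorname{sech}^2 y - 1 + 12y^2 \;=\; 12y^2 - \tanh^2 y \;\ge\; 12y^2 - y^2 \;\ge\; 0,
\]
using the standard bound $|\tanh y|\le |y|$. Hence $h'$ is nondecreasing with $h'(0)=0$, so $h'(y)$ has the same sign as $y$, making $h$ minimized at $0$; thus $h(y)\ge 0$ everywhere. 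Exponentiating gives $\cosh(y)\ge e^{y^2/2 - y^4}$, and taking the product over $m$ yields $\prod_m e^{\lambda^2 c_m^2/2 - \lambda^4 c_m^4} = e^{\lambda^2 C_n/2 - \lambda^4 D_n}$, which is the lower bound.

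The whole argument is elementary once one notices the independence structure; the only mildly delicate step is the lower bound on $\cosh$, which I expect to be the main (but still minor) obstacle, since one must choose a cubic correction term that is simultaneously strong enough to be useful and loose enough to hold globally. The factor $y^4$ (rather than the tighter $y^4/12$ from the true Taylor expansion) is chosen precisely so that the $\operatorname{sech}^2 - 1 + 12y^2 \ge 0$ argument closes without further case analysis.
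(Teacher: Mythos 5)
Your proof is correct: the factorization $\int_0^1 e^{\lambda f_n(t)}\,\mathrm{d}t=\prod_{m=0}^n\cosh(\lambda c_m)$ via independence of the Rademacher functions, together with the scalar bounds $e^{y^2/2-y^4}\le\cosh y\le e^{y^2/2}$ (both of which you verify soundly), gives exactly the stated inequalities. The paper itself imports this lemma from Salem and Zygmund without proof, and your argument is essentially the standard one used there, so there is nothing to flag.
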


\begin{lemma}[From \cite{SZ54}]
\label{lemma:2}
Let $g(x,y)$, $a \leq x \leq b$, $c \leq y \leq d$, be a bounded real function. Suppose that
\[|g(x,y)|\leq A, \frac{\int_c^d{\int_a^b{g^2(x,y)\,\mathrm{d}x}\,\mathrm{d}y}}{(b-a)(d-c)}=B.\]
Then, for any positive number $\mu$, 
\[\frac{\int_c^d{\int_a^b{e^{\mu g(x,y)}\,\mathrm{d}x}\,\mathrm{d}y}}{(b-a)(d-c)}\leq 1 + \mu \sqrt{B} + \frac{B}{A^2}e^{\mu A}.\]
Furthermore, when $\int_c^d{\int_a^b{g(x,y)\,\mathrm{d}x}\,\mathrm{d}y}=0$,
\begin{equation}
\label{eq:exp}
\frac{\int_c^d{\int_a^b{e^{\mu g(x,y)}\,\mathrm{d}x}\,\mathrm{d}y}}{(b-a)(d-c)}\leq 1 + \frac{B}{A^2}e^{\mu A}.
\end{equation}
\end{lemma}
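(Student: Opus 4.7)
The plan is to reduce Lemma \ref{lemma:2} to a single pointwise exponential inequality on $[-A,A]$ and then integrate. First I would establish that for every real $u$ with $|u|\leq A$ and every $\mu>0$,
\[
e^{\mu u} \;\leq\; 1 + \mu u + \frac{u^2}{A^2}\bigl(e^{\mu A} - 1 - \mu A\bigr).
\]
To prove this, I expand
\[
e^{\mu u} - 1 - \mu u \;=\; u^2 \sum_{k\geq 0} \frac{\mu^{k+2}}{(k+2)!}\, u^k,
\]
and note that every coefficient on the right is nonnegative. For $|u|\leq A$ the inequality $u^k \leq A^k$ holds for every $k\geq 0$ (when $u\geq 0$ this is $u\leq A$ raised to the $k$-th power; when $u<0$ and $k$ is odd the left side is negative, and when $k$ is even $u^k=|u|^k\leq A^k$), so summing term-by-term gives the stated pointwise bound.

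With the pointwise inequality in hand I would substitute $u = g(x,y)$, which is legitimate because $|g|\leq A$, and average over $[a,b]\times[c,d]$. Writing $\bar{g}$ for the mean of $g$ over the rectangle, this yields
\[
\frac{\int_c^d\!\int_a^b e^{\mu g(x,y)}\,\mathrm{d}x\,\mathrm{d}y}{(b-a)(d-c)} \;\leq\; 1 + \mu \bar{g} + \frac{B}{A^2}\bigl(e^{\mu A} - 1 - \mu A\bigr).
\]
By Cauchy--Schwarz, $\bar{g}^{\,2}\leq \overline{g^{\,2}} = B$, so $\mu\bar{g}\leq \mu\sqrt{B}$, and trivially $e^{\mu A} - 1 - \mu A \leq e^{\mu A}$. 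Combining these two estimates produces the first claimed bound. For the refined version, the assumption $\int\int g\,\mathrm{d}x\,\mathrm{d}y = 0$ forces $\bar{g}=0$, so the middle term drops out and equation (\ref{eq:exp}) follows immediately.

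The only genuine obstacle is identifying the correct pointwise majorant for $e^{\mu u}$: it must be quadratic in $u$ with a coefficient proportional to $1/A^2$, since only this shape produces the $B/A^2$ factor after integration against $g^2$. Once this form is guessed, verifying it reduces to the elementary term-by-term comparison above, and everything else is a direct integration and one application of Cauchy--Schwarz. An alternative approach — splitting $g$ into its positive and negative parts and using convexity on each — is also feasible but produces worse constants and does not align cleanly with the shape of (\ref{eq:exp}), so I would stick with the unified pointwise inequality.
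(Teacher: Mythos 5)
Your proof is correct, and it is essentially the same argument that Salem and Zygmund use (the paper cites the lemma from \cite{SZ54} without reproducing a proof). Each step checks out: the Taylor expansion
\[
e^{\mu u}-1-\mu u \;=\; u^2\sum_{k\geq 0}\frac{\mu^{k+2}}{(k+2)!}\,u^k,
\]
the termwise bound $u^k\leq A^k$ valid for $|u|\leq A$ and all $k\geq 0$ (including odd $k$ with $u<0$, where the left side is negative), the resummation to $\frac{u^2}{A^2}(e^{\mu A}-1-\mu A)$, integration against the uniform measure on the rectangle, Cauchy--Schwarz to pass from $\bar g$ to $\sqrt{B}$, and the crude relaxation $e^{\mu A}-1-\mu A\leq e^{\mu A}$ (valid since $\mu,A\geq 0$). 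The zero-mean case drops the $\mu\bar g$ term exactly as you say. No gaps.
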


\begin{lemma}[From \cite{SZ54}]
\label{lemma:3}
Let $x$ be real and $P(x)=\sum_{m=0}^n{\alpha_m \cos mx + \beta_m \sin mx}$ be a tri\-go\-nometric polynomial of order $n$, with real or imaginary coefficients. Let $M$ denote the maximum of $|P(x)|$ and let $\theta$ be a positive number less than 1. Then there exists an interval of length not less than $\frac{1-\theta}{n}$ in which $|P(x)|\geq \theta M$.
\end{lemma}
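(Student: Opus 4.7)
The plan is to combine Bernstein's inequality for trigonometric polynomials with a first-order estimate near a maximum. Bernstein's inequality states that for any trigonometric polynomial $P$ of order $n$, the derivative satisfies $\|P'\|_\infty \leq n \|P\|_\infty$. I would cite this as the one nontrivial input; everything else is elementary.

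First, I would pick a point $x_0$ at which $|P(x_0)| = M$. The hypothesis that the coefficients are either all real or all purely imaginary means that $P$ (or $-iP$) is a real-valued function, so by replacing $P$ with $\pm P$ or $\pm iP$ I may assume $P(x_0) = M$ without loss of generality, and this preserves both $|P|$ and the degree. Now Bernstein's inequality applied to the (real) polynomial $P$ gives $|P'(x)| \leq nM$ for every $x$.

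Next, I would use the mean value theorem: for any $x$,
\[
P(x) \;\geq\; P(x_0) - nM\,|x - x_0| \;=\; M\bigl(1 - n|x - x_0|\bigr).
\]
Requiring the right-hand side to be at least $\theta M$ gives $|x - x_0| \leq (1-\theta)/n$. Hence on the interval $[x_0 - (1-\theta)/n,\; x_0 + (1-\theta)/n]$, which has length $2(1-\theta)/n$, we have $P(x) \geq \theta M \geq 0$, and in particular $|P(x)| \geq \theta M$. Since $2(1-\theta)/n > (1-\theta)/n$, the conclusion follows (and periodicity of trigonometric polynomials handles the case where this interval crosses the endpoints of $[0,2\pi]$).

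There is no genuine obstacle here: once Bernstein's inequality is invoked the argument is a two-line Taylor estimate. The only thing to be careful about is the phrase \emph{real or imaginary coefficients}, which I handle by the reduction above so that $P$ can be treated as a real-valued function to which the mean value theorem directly applies.
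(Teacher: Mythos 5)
Your proof is correct and follows essentially the same route as the source: the paper itself does not reprove this lemma but cites Salem--Zygmund, whose argument is exactly this combination of Bernstein's inequality $\|P'\|_\infty \le n\|P\|_\infty$ with a first-order estimate around a point where $|P|$ attains its maximum. Your reduction to a real-valued polynomial (and, if one wanted fully general complex coefficients, the variant of taking $\Re(e^{-i\phi}P)$ at the maximizer) is the standard way to justify applying the mean value theorem, so there is no gap.
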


\begin{lemma}[From \cite{SZ54}]
\label{lemma:4}
Let $\varphi(x)\geq 0$, and suppose that
\[\int_0^1{\varphi(x)\,\mathrm{d}x}\geq A > 0, \int_0^1{\varphi^2(x)\,\mathrm{d}x}\leq B\]
(clearly, $A^2\leq B$). Let $0<\delta<1$. Then
\[\Pr\left[\varphi(x)\geq \delta A \ |\  0 \leq x \leq 1\right] \geq (1-\delta )^2\frac{A^2}{B}.\]
\end{lemma}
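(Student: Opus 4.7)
The plan is a direct application of the Paley--Zygmund second-moment inequality. Let $E = \{x \in [0,1] : \varphi(x) \geq \delta A\}$ and write $|E|$ for its Lebesgue measure, which equals $\Pr[\varphi(x) \geq \delta A \mid 0 \leq x \leq 1]$ when $x$ is uniform on $[0,1]$.

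First I would peel off the mass on $E^c$ by splitting
\[ A \leq \int_0^1 \varphi(x)\,\mathrm{d}x = \int_E \varphi(x)\,\mathrm{d}x + \int_{[0,1]\setminus E} \varphi(x)\,\mathrm{d}x. \]
On the complement of $E$ one has $0 \leq \varphi(x) < \delta A$ (using $\varphi \geq 0$), so the second integral is bounded by $\delta A \cdot |[0,1]\setminus E| \leq \delta A$. Rearranging yields the first-moment lower bound $\int_E \varphi(x)\,\mathrm{d}x \geq (1-\delta)A$.

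I would then convert this into a measure lower bound by Cauchy--Schwarz applied to $\varphi \cdot \mathbf{1}_E$:
\[ (1-\delta)A \leq \int_E \varphi(x)\,\mathrm{d}x \leq \left(\int_E \varphi^2(x)\,\mathrm{d}x\right)^{1/2}\left(\int_E 1\,\mathrm{d}x\right)^{1/2} \leq \sqrt{B}\,\sqrt{|E|}, \]
using the hypothesis $\int_0^1 \varphi^2 \leq B$ to dominate the first factor. Squaring gives $|E| \geq (1-\delta)^2 A^2/B$, which is the claim; the parenthetical $A^2 \leq B$ (Cauchy--Schwarz with $\varphi$ against the constant $1$) guarantees this quantity is at most $(1-\delta)^2 \leq 1$, so the bound is consistent with $|E|$ being a probability. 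There is no real obstacle in the argument: the entire content is the Cauchy--Schwarz step that turns a first-moment estimate on $\int_E \varphi$ into a genuine lower bound on $|E|$ itself.
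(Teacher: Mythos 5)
Your argument is correct and is exactly the classical Paley--Zygmund second-moment argument; the paper itself states this lemma without proof, importing it from Salem and Zygmund, whose proof proceeds in the same way (restrict to the set $\{\varphi \geq \delta A\}$, bound the complement's contribution by $\delta A$, then apply Cauchy--Schwarz and square). Nothing further is needed.
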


\begin{lemma}
\label{lemma:5}
\[\frac{\sum_{i=0}^n{\binom{n}{i}^4}}{\left(\sum_{i=0}^n{\binom{n}{i}^2}\right)^2}\leq \frac{4}{3} n^{-\frac{1}{2}}\]
\end{lemma}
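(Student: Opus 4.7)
The plan is to decouple the fourth power in the numerator by pulling out the maximum binomial coefficient. Since $\binom{n}{i}\leq\binom{n}{\lfloor n/2\rfloor}$ for every $i$, we have $\binom{n}{i}^{4}\leq\binom{n}{\lfloor n/2\rfloor}^{2}\binom{n}{i}^{2}$, and summing over $i$ yields $T_n\leq\binom{n}{\lfloor n/2\rfloor}^{2}R_n$. Combining this with the Vandermonde identity $R_n=\binom{2n}{n}$, the claim reduces to the single inequality
\[\binom{n}{\lfloor n/2\rfloor}^{2}\leq\frac{4}{3\sqrt{n}}\,\binom{2n}{n}.\]

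I would then establish this reduced bound by a direct application of Stirling's formula in a refined form such as the Robbins inequalities $\sqrt{2\pi m}\,(m/e)^{m}e^{1/(12m+1)}<m!<\sqrt{2\pi m}\,(m/e)^{m}e^{1/(12m)}$. Applied to both $\binom{n}{\lfloor n/2\rfloor}$ and $\binom{2n}{n}$, these yield an estimate of the form $\binom{n}{\lfloor n/2\rfloor}^{2}/\binom{2n}{n}\leq (2/\sqrt{\pi n})\,e^{c/n}$ for an explicit small constant $c$. Since $2/\sqrt{\pi}\approx 1.128$ is already strictly less than $4/3\approx 1.333$, the asymptotic constant sits safely inside the required bound and the $e^{c/n}$ correction is absorbed for all $n$ beyond a modest explicit threshold; the finitely many remaining small values of $n$ are verified by direct numerical substitution. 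The computation naturally splits into the cases $n=2k$ and $n=2k+1$, since $\binom{n}{\lfloor n/2\rfloor}$ takes a different closed form in each parity, but the two subcases are structurally identical (the odd case follows from the even one via the identity $\binom{2k+1}{k}=\binom{2k}{k}(2k+1)/(k+1)$).

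The main obstacle is purely quantitative rather than conceptual: the asymptotic ratio of the two sides of the reduced inequality equals $3/(2\sqrt{\pi})\approx 0.846$, leaving only about an eighteen-percent margin. Consequently the Stirling error terms must be tracked explicitly rather than absorbed into $o(1)$, but each such estimate is entirely elementary and no idea beyond careful arithmetic is required to close the argument.
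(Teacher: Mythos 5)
Your proposal is correct and follows essentially the same route as the paper: both decouple the numerator by pulling out the central binomial coefficient, reduce via Vandermonde's identity $R_n=\binom{2n}{n}$ to the single inequality $\binom{n}{\lfloor n/2\rfloor}^2\leq\frac{4}{3\sqrt n}\binom{2n}{n}$, and split by parity. The only difference is the final elementary estimate: the paper invokes ready-made non-asymptotic bounds $\binom{n}{n/2}\leq 2^n/\sqrt{3n/2+1}$ and $\binom{2n}{n}\geq 4^n/\sqrt{4n}$, which close the inequality in one line without any need to track Stirling error terms or check small $n$ separately, whereas your Robbins-based plan is workable but strictly more bookkeeping.
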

\begin{proof}
If $n$ is even:
\[\begin{split}&\frac{\sum_{i=0}^n{\binom{n}{i}^4}}{\left(\sum_{i=0}^n{\binom{n}{i}^2}\right)^2} \leq \frac{\sum_{i=0}^n{\binom{n}{i}^2}\binom{n}{n/2}^2}{\left(\sum_{i=0}^n{\binom{n}{i}^2}\right)^2} = \frac{\binom{n}{n/2}^2}{\binom{2n}{n}} \leq \\
&\leq \frac{\left(\frac{2^n}{\sqrt{3\frac{n}{2}+1}}\right)^2}{\frac{4^n}{\sqrt{4n}}} \leq \frac{\sqrt{4n}}{3\frac{n}{2}+1} \leq \frac{4}{3} n^{-\frac{1}{2}}\end{split}\]

If $n$ is odd:
\[\begin{split}&\frac{\sum_{i=0}^n{\binom{n}{i}^4}}{\left(\sum_{i=0}^n{\binom{n}{i}^2}\right)^2} \leq \frac{\sum_{i=0}^n{\binom{n}{i}^2}\binom{n}{\lfloor n/2 \rfloor}^2}{\left(\sum_{i=0}^n{\binom{n}{i}^2}\right)^2} = \frac{\left(\frac{\binom{n+1}{\frac{n+1}{2}}}{2}\right)^2}{\binom{2n}{n}} \leq \\
&\leq \frac{\left(\frac{2^{n+1}}{2\sqrt{3\frac{n+1}{2}+1}}\right)^2}{\frac{4^n}{\sqrt{4n}}} \leq \frac{\sqrt{4n}}{3\frac{n+1}{2}+1} \leq \frac{4}{3} n^{-\frac{1}{2}}\end{split}\]
\end{proof}

\begin{proof}[Proof of Theorem \ref{thm:lower}]
Set $I_n(t)=\frac{1}{2\pi}\int_0^{2\pi}{e^{\lambda P_n(x,t)}\,\mathrm{d}x}$. We proceed to give an upper bound for for $\int_0^1{I_n(t)\,\mathrm{d}t}$ and lower bound for $\int_0^1{I_n^2(t)\,\mathrm{d}t}$ using Lemma \ref{lemma:1}. Then we will plug in these bounds in Lemma \ref{lemma:4} for $\varphi = I_n$.

First, the lower bound clause of Lemma \ref{lemma:1} applied to $I_n(t)$ gives for any real $\lambda $ (we will assign its value later, at our convenience),
\[\begin{split}&\int_0^1{I_n(t)\,\mathrm{d}t}=\int_0^1{\left(\frac{1}{2\pi}\int_0^{2\pi}{e^{\lambda P_n(x,t)}\,\mathrm{d}x}\right)\,\mathrm{d}t} = \frac{1}{2\pi}\int_0^{2\pi}{\int_0^1{e^{\lambda P_n(x,t)}\,\mathrm{d}t}\,\mathrm{d}x}\geq \\
\geq &\frac{1}{2\pi}\int_0^{2\pi}{e^{\frac{1}{2}\lambda^2\sum_{m=0}^n{(r_m \cos{mx})^2} - \lambda^4\sum_{m=0}^n{(r_m \cos{mx})^4}}\,\mathrm{d}x} \geq \\
\geq &\frac{1}{2\pi}\int_0^{2\pi}{e^{\frac{1}{2}\lambda^2\sum_{m=0}^n{(r_m \cos{mx})^2} - \lambda^4T_n}\,\mathrm{d}x} = \\
&= \left(e^{\frac{1}{4}\lambda^2R_n-\lambda^4T_n}\right)\cdot \frac{1}{2\pi}\int_0^{2\pi}{e^{\frac{1}{2}\lambda^2\sum_{m=0}^n{(r_m \cos{mx})^2-\frac{r_m^2}{2}}}\,\mathrm{d}x} = \\
&= \left(e^{\frac{1}{4}\lambda^2R_n-\lambda^4T_n}\right)\cdot \frac{1}{2\pi}\int_0^{2\pi}{e^{\frac{1}{4}\lambda^2\sum_{m=0}^n{(r_m^2 \cos{2mx})}}\,\mathrm{d}x} > \\
&> \left(e^{\frac{1}{4}\lambda^2R_n-\lambda^4T_n}\right)\cdot \frac{1}{2\pi}\int_0^{2\pi}{\left(1+\frac{1}{4}\lambda^2\sum_{m=0}^n{(r_m^2 \cos{2mx})}\right)\,\mathrm{d}x} \geq \\
&\geq \left(e^{\frac{1}{4}\lambda^2R_n-\lambda^4T_n}\right)
\end{split}
\]
The second step is to establish an upper bound for $\int_0^1{I_n^2(t)\,\mathrm{d}t}$. We start out in a similar fashion, by applying Lemma \ref{lemma:1}:
\[\begin{split}&\int_0^1{I_n^2(t)\,\mathrm{d}t} = \frac{1}{(2\pi)^2}\int_0^{2\pi}{\int_0^{2\pi}{\int_0^1{e^{\lambda (P_n(x,t)+ P_n(y,t))}\,\mathrm{d}t}\,\mathrm{d}x}\,\mathrm{d}y} \leq \\
&\leq \frac{1}{(2\pi)^2}\int_0^{2\pi}{\int_0^{2\pi}{e^{\frac{1}{2}\lambda^2\sum_{m=0}^n{r_m^2(\cos{m x}+\cos{m y})^2}}\,\mathrm{d}x}\,\mathrm{d}y} = \\
&= e^{\frac{1}{2}\lambda^2(R_n+r_0^2)}\cdot \frac{1}{(2\pi)^2}\int_0^{2\pi}{\int_0^{2\pi}{e^{\frac{1}{2}\lambda^2 S_n(x,y)}\,\mathrm{d}x}\,\mathrm{d}y}
\end{split}\]
where 
\[S_n(x,y)=\sum_{m=1}^n{\left(\frac{1}{2}r_m^2 \cos{2m x} + \frac{1}{2}r_m^2 \cos{2m y} + 2r_m^2\cos{m x}\cos{m y}\right)}.\]
One can verify that
\begin{enumerate}[a)]
  \item \[\int_0^{2\pi}{\int_0^{2\pi}{S_n(x,y)\,\mathrm{d}x}\,\mathrm{d}y} = 0, \]
  \item
\[\begin{split}
&\frac{1}{(2\pi)^2} \int_0^{2\pi}{\int_0^{2\pi}{S_n(x,y)^2\,\mathrm{d}x}\,\mathrm{d}y} =\\
&= \frac{1}{2\pi} \sum_{m=1}^n {\int_0^{2\pi}{\left(\frac{1}{2}r_m^2 \cos{2m x}\right)^2\,\mathrm{d}x}} + \\
&+ \frac{1}{2\pi} \sum_{m=1}^n {\int_0^{2\pi}{\left(\frac{1}{2}r_m^2 \cos{2m y}\right)^2\,\mathrm{d}y}} + \\
&+ \frac{1}{(2\pi)^2} \sum_{m=1}^n {\int_0^{2\pi}{\int_0^{2\pi}{\left(2r_m^2\cos{m x}\cos{m y}\right)^2\,\mathrm{d}x}\,\mathrm{d}y}} = \\
&= \frac{5}{4}T_n
\end{split}\]
  \item \[\left|S_n(x,y)\right| \leq 3 R_n\]
\end{enumerate}
We apply eq. \ref{eq:exp} from Lemma \ref{lemma:2} with function $g = S_n$, $\mu = \frac{1}{2}\lambda^2$, $A = 3R_n$ and $B=\frac{5}{4}T_n$. We get
\[\begin{split}\frac{1}{(2\pi)^2}\int_0^{2\pi}{\int_0^{2\pi}{e^{\frac{1}{2}\lambda^2 S_n(x,y)}\,\mathrm{d}x}\,\mathrm{d}y} &\leq 1+\frac{\frac{5}{4}T_n}{9 R_n^2}e^{\frac{3}{2}\lambda^2R_n} \leq \\
&\leq 1+\frac{T_n}{R_n^2}e^{\frac{3}{2}\lambda^2R_n}
\end{split}
\]
And by Lemma \ref{lemma:5},
\[1+\frac{T_n}{R_n^2}e^{\frac{3}{2}\lambda^2R_n} \leq 1+\frac{4}{3}n^{-\frac{1}{2}}e^{\frac{3}{2}\lambda^2R_n}\]

So far we have established the two prerequisites for Lemma \ref{lemma:4}:
\begin{enumerate}[1)]
\item \[\int_0^1{I_n(t)\,\mathrm{d}t} > e^{\frac{1}{4}\lambda^2R_n-\lambda^4T_n},\]
\item \[\int_0^1{I_n^2(t)\,\mathrm{d}t} \leq e^{\frac{1}{2}\lambda^2(R_n+r_0^2)}\left(1+\frac{4}{3}n^{-\frac{1}{2}}e^{\frac{3}{2}\lambda^2R_n}\right).\]
\end{enumerate}

The third step is to apply Lemma \ref{lemma:4} with $\varphi = I_n$, $A = e^{\frac{1}{4}\lambda^2R_n-\lambda^4T_n}$, $B = e^{\frac{1}{2}\lambda^2(R_n+r_0^2)}\left(1+\frac{4}{3}n^{-\frac{1}{2}}e^{\frac{3}{2}\lambda^2R_n}\right)$ and $\delta = n^{-\eta}$. This results in
\[\begin{split}\Pr[I_n(t) \geq n^{-\eta}e^{\frac{1}{4}\lambda^2R_n-\lambda^4T_n} ] &\geq (1-n^{-\eta})^2\frac{e^{\frac{1}{2}\lambda^2R_n-2\lambda^4T_n}}{e^{\frac{1}{2}\lambda^2(R_n+r_0^2)}\left(1+\frac{4}{3}n^{-\frac{1}{2}}e^{\frac{3}{2}\lambda^2R_n}\right)} \geq \\
&\geq (1-n^{-\eta})^2e^{-2\lambda^4T_n-\frac{1}{2}\lambda^2r_0^2}\left(1-\frac{4}{3}n^{-\frac{1}{2}}e^{\frac{3}{2}\lambda^2R_n}\right)
\end{split}
  \]
Finally we show that for suitably chosen $\lambda$, $\theta$ and $\eta$ the claim follows. Set $\lambda = \theta \sqrt{\frac{\ln{n}}{R_n}}$ having $\theta$ such that $2\sqrt{\eta} < \theta < \sqrt{\frac{1}{3}}$. We deal with the two claims separately:
\begin{claim}
\[I_n(t) \geq n^{-\eta}e^{\frac{1}{4}\lambda^2R_n-\lambda^4T_n} \implies M_n(t) \geq C_1\sqrt{R_n \ln{n}}\]
\end{claim}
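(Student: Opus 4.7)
The plan is to extract pointwise information about $M_n(t)$ from the integral bound on $I_n(t)$ by using the trivial observation that an average is bounded by its maximum. Since $P_n(x,t) \leq |P_n(x,t)| \leq M_n(t)$ for every $x$ and every $\lambda > 0$, one has
\[
I_n(t) \;=\; \frac{1}{2\pi}\int_0^{2\pi} e^{\lambda P_n(x,t)}\,\mathrm{d}x \;\leq\; e^{\lambda M_n(t)}.
\]
Combining this with the hypothesis $I_n(t) \geq n^{-\eta} e^{\frac{1}{4}\lambda^2 R_n - \lambda^4 T_n}$ and taking logarithms gives
\[
\lambda M_n(t) \;\geq\; \tfrac{1}{4}\lambda^2 R_n - \lambda^4 T_n - \eta \ln n,
\]
and dividing by $\lambda > 0$ yields $M_n(t) \geq \tfrac{1}{4}\lambda R_n - \lambda^3 T_n - \eta \ln n / \lambda$.

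The next step is to substitute $\lambda = \theta\sqrt{\ln n / R_n}$. The first and last terms on the right-hand side become $\frac{\theta}{4}\sqrt{R_n \ln n}$ and $\frac{\eta}{\theta}\sqrt{R_n \ln n}$, respectively, while the middle error term can be rewritten as $\lambda^3 T_n = \theta^3 (T_n / R_n^2)(\ln n)\sqrt{R_n \ln n}$. Invoking Lemma \ref{lemma:5} to bound $T_n / R_n^2 \leq \tfrac{4}{3} n^{-1/2}$, this term is at most $\tfrac{4\theta^3}{3}\, n^{-1/2}\ln n \cdot \sqrt{R_n \ln n} = o(\sqrt{R_n \ln n})$ as $n \to \infty$.

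Putting these estimates together, one obtains
\[
M_n(t) \;\geq\; \left(\frac{\theta}{4} \;-\; \frac{\eta}{\theta} \;-\; o(1)\right)\sqrt{R_n \ln n}.
\]
The condition $2\sqrt{\eta} < \theta$ ensures $\frac{\theta}{4} - \frac{\eta}{\theta} > 0$, and by choosing $\theta$ arbitrarily close to $1/\sqrt{3}$ and $\eta$ sufficiently small, the leading coefficient can be made arbitrarily close to $1/(4\sqrt{3}) = C_1$ from below; hence for all sufficiently large $n$ the desired bound $M_n(t) \geq C_1 \sqrt{R_n \ln n}$ holds.

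The only nontrivial ingredient is the estimate $T_n / R_n^2 = O(n^{-1/2})$ from Lemma \ref{lemma:5}, without which the $\lambda^3 T_n$ correction in the logarithm of $I_n(t)$ would fail to be negligible at the chosen scale of $\lambda$; the rest is just the sup bound on the integral and bookkeeping of the substitution.
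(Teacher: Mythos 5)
Your proposal is correct and follows essentially the same route as the paper's own proof: bound $I_n(t)\leq e^{\lambda M_n(t)}$, take logarithms, substitute $\lambda=\theta\sqrt{\ln n/R_n}$, and use Lemma \ref{lemma:5} to make the $\lambda^3 T_n$ term negligible, concluding with the same choice of $\theta$ near $1/\sqrt{3}$ and small $\eta$. The slight looseness at the end (the coefficient $\frac{\theta}{4}-\frac{\eta}{\theta}$ is strictly below $\frac{1}{4\sqrt{3}}$, so strictly one only gets constants approaching $C_1$) is present in the paper's own wording as well, so nothing new is lost.
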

\begin{proof}
Note that
\[e^{\lambda M_n(t)} \geq I_n(t) \geq e^{\frac{1}{4}\lambda^2R_n-\lambda^4T_n-\eta \ln{n}}\]
Thus
\[\begin{split}M_n(t) &\geq \frac{1}{4}\lambda R_n-\lambda^3T_n-\frac{\eta}{\lambda} \ln{n} =\\
&= \frac{\theta}{4}\sqrt{R_n \ln{n}}-\theta^3\sqrt{R_n \ln{n}}\ln{n}\frac{T_n}{R_n^2} - \frac{\eta}{\theta}\sqrt{R_n \ln{n}} =\\
&= \sqrt{R_n \ln{n}}\left(\frac{\theta}{4}-\theta^3\frac{4\ln{n}}{3\sqrt{n}}-\frac{\eta}{\theta}\right) \rightarrow \sqrt{R_n \ln{n}}\left(\frac{\theta}{4}-\frac{\eta}{\theta}\right)
\end{split}
\]
But $ \frac{\theta}{4}-\frac{\eta}{\theta} = \text{constant} > 0$. We can choose $\theta$ arbitrarily close to $\sqrt{\frac{1}{3}}$ and $\eta$ arbitrarily close to 0 to obtain $C_1=\frac{1}{4\sqrt{3}}$.
\end{proof}
\begin{claim}
\[\lim_{n\rightarrow \infty}{(1-n^{-\eta})^2e^{-2\lambda^4T_n-\frac{1}{2}\lambda^2r_0^2}\left(1-\frac{4}{3}n^{-\frac{1}{2}}e^{\frac{3}{2}\lambda^2R_n}\right)} = 1\]
\end{claim}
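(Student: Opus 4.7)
The plan is to substitute $\lambda^2 = \theta^2 \ln n / R_n$ into each of the three factors in the product and show that each factor independently tends to $1$, so that the full product does as well. The three factors are $(1-n^{-\eta})^2$, the exponential $e^{-2\lambda^4 T_n - \frac{1}{2}\lambda^2 r_0^2}$, and $\bigl(1-\tfrac{4}{3}n^{-1/2} e^{\frac{3}{2}\lambda^2 R_n}\bigr)$. The first factor tends to $1$ trivially because $\eta > 0$.

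For the second factor I would bound its exponent. For the $\lambda^4 T_n$ term, substitute $\lambda^4 = \theta^4 (\ln n)^2 / R_n^2$ and apply Lemma \ref{lemma:5} to get
\[2\lambda^4 T_n = 2\theta^4 (\ln n)^2 \cdot \frac{T_n}{R_n^2} \leq \frac{8}{3}\theta^4 \frac{(\ln n)^2}{\sqrt{n}} \longrightarrow 0.\]
For the $\lambda^2 r_0^2$ term, note $r_0 = 1$ while $R_n = \binom{2n}{n} \to \infty$, so $\lambda^2 r_0^2 = \theta^2 \ln n / R_n \to 0$. Hence the exponent tends to $0$ and the exponential factor tends to $1$.

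The key step — and the only place the upper bound $\theta < \sqrt{1/3}$ is actually used — is the third factor. Here $\lambda^2 R_n = \theta^2 \ln n$, so $e^{\frac{3}{2}\lambda^2 R_n} = n^{\frac{3}{2}\theta^2}$, and
\[\frac{4}{3}n^{-\frac{1}{2}} e^{\frac{3}{2}\lambda^2 R_n} = \frac{4}{3} n^{\frac{3}{2}\theta^2 - \frac{1}{2}}.\]
Because $\theta^2 < \tfrac{1}{3}$, the exponent $\tfrac{3}{2}\theta^2 - \tfrac{1}{2}$ is strictly negative, so this quantity tends to $0$ and the third factor tends to $1$. Multiplying the three limits yields $1$.

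I do not expect a real obstacle: the role of the hypothesis $2\sqrt{\eta} < \theta < \sqrt{1/3}$ is precisely to make each of the three factors converge, and the only nontrivial verification is that the choice $\theta^2 < 1/3$ exactly defeats the exponential $e^{\frac{3}{2}\lambda^2 R_n}$ against the $n^{-1/2}$ factor coming from Lemma \ref{lemma:5}. The calculations are routine estimates of exponents of $n$ and $\ln n$; nothing more delicate is needed.
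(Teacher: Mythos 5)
Your proposal is correct and follows essentially the same route as the paper: substitute $\lambda^2=\theta^2\ln n/R_n$, use Lemma \ref{lemma:5} to kill the $2\lambda^4T_n$ term, note $\lambda^2 r_0^2\to 0$, and observe that $\frac{4}{3}n^{-1/2}e^{\frac{3}{2}\lambda^2R_n}=\frac{4}{3}n^{(3\theta^2-1)/2}\to 0$ precisely because $\theta^2<\frac{1}{3}$. Your added remarks (that $r_0=1$, $R_n=\binom{2n}{n}\to\infty$, and that this claim is where the upper bound on $\theta$ is used) are accurate and merely make explicit what the paper leaves implicit.
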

\begin{proof}
Since $\eta$ is positive, $n^{-\eta}\rightarrow 0$.
\[e^{-2\lambda^4T_n-\frac{1}{2}\lambda^2r_0^2} = e^{-2\theta^4(\ln{n})^2\frac{T_n}{R_n^2}-\frac{1}{2}\theta^2r_0^2\frac{\ln{n}}{R_n}} \geq e^{-\frac{8}{3\sqrt{n}}\theta^4(\ln{n})^2-\frac{1}{2}\theta^2r_0^2\frac{\ln{n}}{R_n}}\rightarrow e^0 = 1\]
\[\frac{4}{3}n^{-\frac{1}{2}}e^{\frac{3}{2}\lambda^2R_n} = \frac{4}{3}n^{-\frac{1}{2}}e^{\frac{3}{2}\theta^2\ln{n}} = \frac{4}{3}n^{\frac{3\theta^2-1}{2}}\rightarrow 0\]
\end{proof}

\end{proof}

\begin{proof}[Proof of Theorem \ref{thm:upper}]
We will examine $\int_0^1{\int_0^{2\pi}{e^{\lambda |P_n(x,t)|}\,\mathrm{d}x}\,\mathrm{d}t}$. By Lemma \ref{lemma:3} there exists $0 < \theta < 1$ such that:
  \[\begin{split}
  &\int_0^1{\int_0^{2\pi}{e^{\lambda |P_n(x,t)|}\,\mathrm{d}x}\,\mathrm{d}t}\geq \\
  &\geq \int_0^1{\frac{1-\theta}{n}e^{\theta \lambda M_n(t)}\,\mathrm{d}t}
  \end{split}\]
On the other hand, by Lemma \ref{lemma:1} we obtain:
  \[\begin{split}
  &\int_0^1{\int_0^{2\pi}{e^{\lambda |P_n(x,t)|}\,\mathrm{d}x}\,\mathrm{d}t} =\\
  &= \int_0^{2\pi}{\int_0^1{e^{\lambda |P_n(x,t)|}\,\mathrm{d}t}\,\mathrm{d}x} \leq\\
  &\leq \int_0^{2\pi}{\int_0^1{e^{\lambda P_n(x,t)}+e^{-\lambda P_n(x,t)}\,\mathrm{d}t}\,\mathrm{d}x} \leq \\
  &\leq \int_0^{2\pi}{\int_0^1{2 e^{\frac{1}{2}\lambda^2 \sum_{m=0}^n{r_m^2 \cos^2{mx}}}\,\mathrm{d}t}\,\mathrm{d}x} \leq \\
  &\leq \int_0^{2\pi}{\int_0^1{2 e^{\frac{1}{2}\lambda^2 R_n}\,\mathrm{d}t}\,\mathrm{d}x} = \\
  &= 4\pi e^{\frac{1}{2}\lambda^2 R_n}
  \end{split}
  \]
Therefore,
  \[ \int_0^1{e^{\theta \lambda M_n(t)}\,\mathrm{d}t} \leq \frac{4\pi}{1-\theta} e^{\frac{1}{2}\lambda^2 R_n + \ln{n}}.\]
Have $\lambda = 2\sqrt{\frac{\ln{n}}{R_n}}$ and multiply both sides by $n^{-4-\eta}$, where $\eta > 0$. Then
  \[ \int_0^1{e^{\theta \lambda M_n(t)-(4+\eta)\ln{n}}\,\mathrm{d}t} \leq \frac{4\pi}{1-\theta} n^{-(1+\eta)}.\]
The sum over all $n$ converges:
  \[ \sum_{n=1}^{\infty}{\int_0^1{e^{\theta \lambda M_n(t)-(4+\eta)\ln{n}}\,\mathrm{d}t}} \leq \sum_{n=1}^{\infty}{\frac{4\pi}{1-\theta} n^{-(1+\eta)}} < \infty.\]
Since the exponent function is non-negative and the whole sum converges, it is safe to interchange sum and integral:
  \[ \int_0^1{\sum_{n=1}^{\infty}{e^{\theta \lambda M_n(t)-(4+\eta)\ln{n}}}\,\mathrm{d}t} < \infty.\]
Therefore, for almost all $t$ 
  \[\sum_{n=1}^\infty{e^{\theta \lambda M_n(t)-(4+\eta)\ln{n}}}<\infty .\]
Hence, for almost all $t$ there exists $n_0$ such that for all $n \geq n_0$
  \[\theta \lambda M_n(t)-(4+\eta)\ln{n} < 0.\]
It follows that
  \[\lim_{n\rightarrow \infty}{\Pr \left[M_n(t) < \frac{(4+\eta)}{2\theta}\sqrt{R_n\ln{n}}\right]} = 1.\]
\end{proof}

\section{Conclusion}

We have proven that the entangled value of almost any $n$-player symmetric XOR game is $\Theta\left( \frac{\sqrt{\ln{n}}}{n^{1/4}}\right )$ and therefore by a factor of $\sqrt{\ln{n}}$ greater than its classical value. However, our numerical experiments indicate that neither of the coefficients $C_1$ and $2C_2$ in eq. \ref{eq:bounds} are tight.
\begin{figure}
\centering
\resizebox{0.95\textwidth}{!}{\input{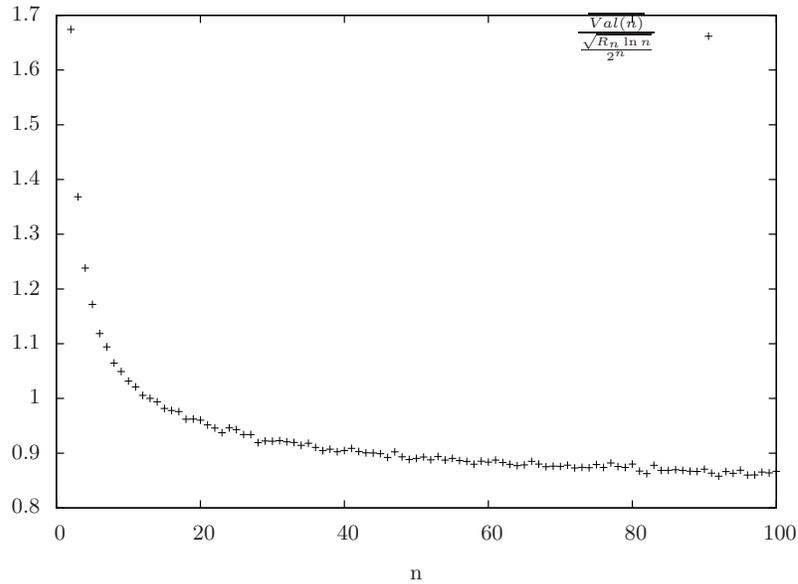}}
\caption{$\frac{\overline{Val(n)}}{\frac{\sqrt{R_n \ln{n}}}{2^n}}$ for a random sample of $n$ player games}
\label{fig:val}
\end{figure}
In Fig. \ref{fig:val} there is plotted the mean value of the coefficient over a sample of $10^5$ games for each $n$ up to 100. We speculate that the actual constant is approaching $\approx 0.85\ldots$.

In this paper we have dealt with a small portion of non-local games. In particular, the case of random non-symmetric games is still open and there has been little progress in multiplayer XOR games with $m-ary$ input. The primary hurdle in the $n$-player $m$-input setting is at the moment it lacks a description in terms of algebraic and analytic expressions. Recently an approach using the theory of operator norms has been successful in proving the entangled value of 3-player, $m$-input XOR game \cite{BV12}.

\bibliographystyle{splncs03}

\phantomsection
\addcontentsline{toc}{chapter}{References}
\bibliography{quantum}

\end{document}